\newtheorem{theorem}{Theorem}
\newtheorem{aexample}[theorem]{Example}
\newtheorem{lemma}[theorem]{Lemma}
\newtheorem{aproblem}[theorem]{Problem}
\newtheorem{acomment}[theorem]{Comment}
\newtheorem{aremark}[theorem]{Remark}
\newenvironment{remark}{\begin{aremark}\rm}{\end{aremark}}
\newtheorem{result}[theorem]{Result}
\newenvironment{proof}[1][Proof]{\noindent\textbf{#1.} }{\ \rule{0.5em}{0.5em}}
\numberwithin{equation}{section} \numberwithin{theorem}{section}
\begin{document}

\title{The absence of the selfaveraging property \\
of the entanglement entropy \\
of disordered free fermions \\
in one dimension}
\author{L. Pastur \\
B. Verkin Institute for Low Temperatures Physics \\
and Engineering, Kharkiv, Ukraine}
\date{}
\maketitle

\begin{abstract}
We consider the macroscopic system of free lattice fermions in one
dimensions assuming that the one-body Hamiltonian of the system is the one
dimensional discrete Schr\"odinger operator with independent identically
distributed random potential. We show that the variance of the entanglement
entropy of the segment $[-M,M]$ of the system is bounded away from zero as $%
M\rightarrow \infty $. This manifests the absence of the selfaveraging
property of the entanglement entropy in our model, meaning that in the
one-dimensional case the complete description of the entanglement entropy
is provided by its whole probability distribution.
This also may be contrasted the case of
dimension two or more, where the variance of the entanglement entropy per
unit surface area vanishes as $M\rightarrow \infty $ \cite{El-Co:17},
thereby guaranteing the representativity of its mean for large $M$ in
the multidimensional case.

PACS numbers 03.67.Mn, 03.67, 05.30.Fk, 72.15.Rn
\end{abstract}


\section{Problem and Results}

This note is an addition to the paper \cite{El-Co:17} by A. Elgart, M.
Shcherbina and the present author in which it is proved the following.
Consider the macroscopic system of free disordered fermions living on the $d$%
-dimensional lattice $\mathbb{Z}^{d}$ and having as the one-body Hamiltonian
the discrete Schr\"{o}dinger operator
\begin{equation}
H=-\Delta _{d}+V,  \label{SD}
\end{equation}%
where%
\begin{equation}
(\Delta _{d}u)(x)=-\sum_{|x-y|=1}u(y)+2d u(x),\;x\in \mathbb{Z}^{d}
\label{lad}
\end{equation}%
is the $d$-dimensional discrete Laplacian and
\begin{equation}
V=\{V(x)\}_{x\in \mathbb{Z}^{d}}  \label{pot}
\end{equation}%
is the random ergodic potential. Assume that the Fermi energy $E$ of the
system lies in the exponentially localized part of spectrum of $H$. This
means that the Fermi projection
\begin{equation}
P=\{P(x,y)\}_{x,y\in \mathbb{Z}^{d}}=\{\mathcal{E}_{H}(x,y;(-\infty
,E))\}_{x,y\in \mathbb{Z}^{d}}  \label{fp}
\end{equation}

of $H$, i.e., its spectral projection measure $\mathcal{E}_{H}((-\infty ,E))$
corresponding to the spectral interval \ $(-\infty ,E)$, admits the bound
\begin{equation}
\mathbf{E}\{|P(x,y)|\}\leq Ce^{-\gamma |x-y|},\;x,y\in \mathbb{Z}^{d}
\label{b_P}
\end{equation}%
for some $C<\infty $ and $\gamma >0$. Here and below the symbol $\mathbf{E}%
\{...\}$ denotes the expectation with respect to the random potential.

We refer the reader to \cite{El-Co:17} for the discussion of the cases where
the bound (\ref{b_P}) holds and guaranties the pure point spectrum of (\ref%
{SD}) with exponentially decaying eigenfunctions (exponential localization).
It is important for us in this paper that in the one-dimensional case $d=1$
the bound holds on the whole spectrum of $H$ if the potential (\ref{pot}) is
the collection of independent identically distributed (i.i.d.) random
variables.

Given the lattice cube (the block)
\begin{equation*}
\Lambda =[-M,M]^{d}\subset \mathbb{Z}^{d}
\end{equation*}%
of the system, define the entanglement entropy of free fermions as%
\begin{equation}
S_{\Lambda }=\mathrm{Tr}\ h(P_{\Lambda }),  \label{ee}
\end{equation}%
where%
\begin{equation}
h(t)=-t\log t-(1-t)\log (1-t),\;t\in \lbrack 0,1]  \label{h}
\end{equation}%
and
\begin{equation}
P_{\Lambda }=\{P(x,y)\}_{x,y\in \Lambda }  \label{fpl}
\end{equation}%
is the restriction of the Fermi projection \ref{fp}) to the block $\Lambda $.

It is proved in \cite{El-Co:17} that for any ergodic potential satisfying
condition (\ref{b_P}) of the exponential localization the entanglement
entropy satisfies the area law in the mean, i.e., there exists the limit%
\begin{equation}
\lim_{M\rightarrow \infty }\mathbf{E}\{L^{(d-1)}S_{\Lambda }\}=2d\,\mathbf{E}%
\{\mathrm{Tr}\,h(P_{\mathbb{Z}_{+}^{d}})\},\;L=2M+1,  \label{alm}
\end{equation}%
where $P_{\mathbb{Z}_{+}^{d}}=\{P(x,y)\}_{x,y\in \mathbb{Z}_{+}^{d}}$ is the
restriction of the Fermi projection (\ref{fp}) to the $d$-dimensional
lattice half-space%
\begin{equation*}
\mathbb{Z}_{+}^{d}=\mathbb{Z}_{+}\times \mathbb{Z}^{d-1},\;\mathbb{Z}%
_{+}=[0,1,..).
\end{equation*}%
See \cite{Ca-Co:11,Ei-Co:11,La:15,Le-Co:13} for various results on the
validity of the area law and its violation in translation invariant (non-random)
systems.

It was also shown in \cite{El-Co:17} that if the random potential is a
collection of i.i.d. random variables and (\ref{b_P}) holds, then there
exist some $C_{d}<\infty $ and $a_{d}>0$ such that
\begin{align}
\mathbf{Var}\{L^{(d-1)}S_{\Lambda }\}:= &\mathbf{E}\{(L^{(d-1)}S_{\Lambda
})^{2}\}-\mathbf{E}^{2}\{L^{(d-1)}S_{\Lambda }\}  \label{vare} \\
&\hspace{2cm}\leq C_{d}\,L^{-a_{d}},\;\;d\geq 2,  \notag
\end{align}%
i.e., that the fluctuations of the entanglement entropy per unit surface
area vanish as $L\rightarrow \infty $.

The relations (\ref{alm}) and (\ref{vare}) imply that in dimension two and
higher the entanglement entropy per unit surface area possesses the
selfaveraging property (see, e.g., \cite{Da-Co:14,Di-Co:98,LGP,Pa-Fi:92} for
discussion and use of the property in the condensed matter theory, spectral
theory and the quantum information theory where it is known as the
typicality).

On the other hand, it follows from the numerical results of \cite{Pa-Sl:14}
that for $d=1$ the fluctuations of the entanglement entropy of the lattice
segment $\Lambda =[-M,M]$ do not vanish as $M\rightarrow \dot{\infty}$ and
according to \cite{El-Co:17} we have for every typical realization (with
probability 1)%
\begin{equation}
S_{[-M,M]}=S_{+}(T^{M}\omega )+S_{-}(T^{-M}\omega )+o(1),\;M\rightarrow
\infty ,  \label{sp1}
\end{equation}%
where
\begin{equation}
S_{\pm }(\omega )=\mathrm{Tr}\,h(P_{\mathbb{Z}_{\mp }}(\omega)), \;\;  \mathbf{E}\{S_{\pm }\} >0. \label{spm}
\end{equation}%
Here and below $\omega =\{V(x)\}_{x\in
\mathbb{Z}}$ denotes a realization of random ergodic potential and $T$ is
the shift operator acting in the space of realizations of potential as $%
T\omega =\{V(x+1)\}_{x\in \mathbb{Z}}$. This suggests that for $d=1$ and
for i.i.d. potential the entanglement entropy of \ disordered free fermions
is not a selfaveraging quantity.

In this note we confirm the suggestion by establishing an $M$-independent
and strictly positive lower bound on the variance of the entanglement
entropy for $d=1$. Unfortunately, the class of random i.i.d. potentials, for
which this results is established, is somewhat limited (see, e.g. Remark \ref%
{r:f}). However, since the absence of selfaveraging property is not
completely common and sufficiently studied in the theory of disordered
systems, we believe that our result is of certain interest.

\begin{result}
Consider the macroscopic system of free lattice fermions in one dimension
whose one-body Hamiltonian is the discrete Schr\"{o}dinger operator (\ref{SD}%
) with i.i.d. potential (\ref{pot}). Assume that the common probability
distribution of $V(x),\;x\in \mathbb{Z}$ has a bounded density $f$ such that

\medskip (i) $\mathrm{supp\ }f=[0,\infty )$ and for some $\kappa >0$
\begin{equation}
\int_{0}^{\infty }v^{\kappa }f(v)dv<\infty ;  \label{cl2}
\end{equation}

\medskip (ii) the quantity
\begin{equation}
F(t)=J(t)-1,\;J(t)=\int_{0}^{\infty }\frac{f^{2}(v-t)}{f(v)}dv  \label{F}
\end{equation}%
is finite for all sufficiently large $t>0.$

Then there exist a sufficiently large $t_{0}>0$ and $M_{0}>0$ such that we
have for the entanglement entropy (\ref{ee}) uniformly in $M>M_{0}$
\begin{equation}
\mathbf{Var}\{S_{[-M,M]}\}:=\mathbf{E}\{S_{[-M,M]}^{2}\}-\mathbf{E}%
^{2}\{S_{[-M,M]}\}\geq A>0,  \label{var}
\end{equation}

\begin{equation}
A=\mathbf{E}^{2}\{S_{-}\}/F(t_{0})  \label{AF}
\end{equation}%
and $S_{-}$ is defined in (\ref{spm}).
\end{result}

\begin{remark}
\label{r:f} (i) It is easy to show (see (\ref{Fpos}) below) that $F\geq 0$.
Moreover, $F$ is unbounded as $t\rightarrow \infty $. Indeed, we have from (%
\ref{F}) and the Jensen inequality%
\begin{eqnarray*}
J(t) &=&\int_{0}^{\infty }\frac{f(v)}{f(v+t)}f(v)dv \\
&\geq &\left( \int_{0}^{\infty }\left( \frac{f(v)}{f(v+t)}\right)
^{-1}f(v)dv\right) ^{-1}=\int_{t}^{\infty }f(v)dv\rightarrow \infty
,\;t\rightarrow \infty .
\end{eqnarray*}%
Thus, $A$ in (\ref{var}) -- (\ref{AF}) can be rather small. Note that above
lower bound for $I$ is exact for the density $f(v)=a^{-1}e^{-av}$

(ii) Condition (i) of the result can be replaced by that for the support of $%
f$ to be bounded from below. However, a compact support is not allowed,
since in this case $J(t)$ in (\ref{F}) is not well defined for large $t$,
since the supports of the numerator and denominator in $J(t)$ do not
intersect. Moreover, even if the support of $f$ is the positive semiaxis, $f$
should not have zeros of the order 1 and higher.
\end{remark}

\textbf{Proof of result}. It follows from (\ref{alm}) for $d=1$ (or from (%
\ref{sp1})) that
\begin{eqnarray}  \label{fim}
\mathbf{E}\{S_{[-M,M]}\} &=&\mathbf{E}\{S_{+}(T^{M}\omega
)+S_{-}(T^{-M}\omega )\}+o(1) \\
&=&2\mathbf{E}\{S_{-}(\omega )\}+o(1),\;M\rightarrow \infty ,  \notag
\end{eqnarray}%
and in obtaining the second equality we used the shift and the reflection
invariance of the probability distribution of the infinite sequence $%
V=\{V(x)\}_{x\in \mathbb{Z}}$ of i.i.d. random variables, see \cite{El-Co:17}%
.

Likewise, repeating almost literally the proof of (\ref{alm}) for $d=1$ in
\cite{El-Co:17}, we obtain%
\begin{eqnarray}
\mathbf{E}\{S_{[-M,M]}^{2}\} &=&\mathbf{E}\{(S_{+}(T^{M}\omega
)+S_{-}(T^{-M}\omega ))^{2}\}+o(1)  \label{es2} \\
&=&2\mathbf{E}\{(S_{-}^(\omega ))^2\}+2\mathbf{E}\{S_{+}(T^{2M}\omega
)S_{-}(\omega )\}+o(1),\;M\rightarrow \infty .  \notag
\end{eqnarray}%
Since the infinite sequence $V=\{V(x)\}_{x\in \mathbb{Z}}$ of i.i.d. random
variables is a mixing stationary process (see e.g. \cite{Sh:95}, Section V.2
), we have%
\begin{eqnarray}  \label{ess}
\mathbf{E}\{S_{+}(T^{2M}\omega )S_{-}(\omega )\} &=&\mathbf{E}\{S_{+}(\omega
)\}\mathbf{E}\{S_{-}(\omega )\}+o(1)  \label{cs2} \\
&=&\mathbf{E}^{2}\{S_{-}(\omega )\}+o(1),\;M\rightarrow \infty .  \notag
\end{eqnarray}%
Combining (\ref{fim}) -- (\ref{ess}), we obtain%
\begin{equation}
\mathbf{Var}\{S_{[-M,M]}\}=2\mathbf{Var}\{S_{-}\}+o(1),\;M\rightarrow \infty
.  \label{v2v}
\end{equation}%
Thus, it suffices to show that $\mathbf{Var}\{S_{-}\}$ is strictly positive.

To this end we start with the inequality
\begin{equation*}
\mathbf{Var}\{\varphi (\xi ,\eta )\}\geq \mathbf{Var}\{\mathbf{E}\{\varphi
(\xi ,\eta )|\xi \}\}
\end{equation*}%
involving the conditional expectation and valid for any random
(multi-component in general) variables $\xi $ and $\eta $ and a function $%
\varphi $. Choosing here $\xi =V(0)$, $\eta =\{V(x)\}_{x\neq 0}$ and $%
\varphi=S_{-}$, we obtain%
\begin{equation*}
\mathbf{Var}\{S_{-}\}\geq \mathbf{Var}\{\mathbf{E}\{S_{-}|V(0)\}\}.
\end{equation*}%
Next, we will use Lemma \ref{l:cr} with $\xi =V(0)$ and $\varphi (\xi )=$ $%
\mathbf{E}\{S_{-}|V(0)=\xi \}$ yielding%
\begin{equation}
\mathbf{Var}\{S_{-}\}\geq \left. (\mathbf{E}\{S_{-}\}-\mathbf{E}%
\{S_{-}^{t}\})^{2}\right/ F(t),  \label{varl}
\end{equation}%
where $S_{-}^{t}$ is the entanglement entropy (\ref{ee}) -- (\ref{fpl})
corresponding to the Schr\"{o}dinger operator $H^{t}$ (see (\ref{SD}) -- (%
\ref{pot})) in which the potential $V(0)$ at the origin is replaced by $%
V(0)+t$%
\begin{equation}
H^{t}=H|_{V(0)\rightarrow V(0)+t}  \label{ht}
\end{equation}
We will prove below that%
\begin{equation}
\lim_{t\rightarrow \infty }\mathbf{E}\{S_{-}^{t}\}=0.  \label{leg}
\end{equation}%
Thus, there exists $\varepsilon(t) \to 0, \; t \to \infty $ (see, e.g (\ref%
{rem})) such that we have in view of (\ref{spm})%
\begin{equation*}
\mathbf{E}\{S_{-}\}-\mathbf{E}\{S_{-}^{t}\}) = (1 - \varepsilon (t))\mathbf{%
E}\{S_{-}\} >0, \;\; t \to \infty
\end{equation*}%
and then (\ref{v2v}) -- (\ref{varl}) yield (\ref{var}) -- (\ref{AF}) upon
choosing sufficiently large $t_0$ and $M_0$ and assuming that $M \ge M_0 $.

Let us prove (\ref{leg}). Since the potential is a collection of i.i.d.
random variables satisfying condition (\ref{cl2}), the spectrum of $H$ is
the positive semiaxis (see Corollary 4.23 in \cite{Pa-Fi:92}). The same is
true for the spectrum of $H^{t}$ since $t>0$. Hence, we have in view of (\ref%
{fp})
\begin{equation}
P=\mathcal{E}_{H}((0,E)),\quad P^{t}=\mathcal{E}_{H^{t}}((0,E)),\quad E>0.
\label{PE}
\end{equation}%
We have from the proof of Lemma 4.5 of \cite{El-Co:17} for some $t$%
-independent $C_0 < \infty$ and any $\alpha \in (0,1)$:%
\begin{align}
\mathbf{E}\{S_{-}^t\})& \leq C_{0}\sum_{x=0}^{\infty }\left( \sum_{y=-\infty
}^{-1}\mathbf{E}\{|P^{t}(x,y)|^{2}\}\right) ^{\alpha }  \label{espt} \\
& \leq C_{0}\mathbf{E}^{\alpha }\{P^{t}(0,0)\}+C_{0}\sum_{x=1}^{\infty
}\left( \sum_{y=-\infty }^{-1}\mathbf{E}\{|P^{t}(x,y)|^{2}\}\right) ^{\alpha
},  \notag
\end{align}%
where we took into account the inequality $(a+b)^{\alpha }\leq a^{\alpha
}+b^{\alpha },\;\alpha \in (0,1)$ and that $P^{t}$ is an orthogonal
projection, hence
\begin{equation*}
\sum_{y=-\infty }^{-1}|P^{t}(0,y)|^{2}\leq \sum_{y=-\infty }^{\infty
}|P^{t}(0,y)|^{2}=P^{t}(0,0).
\end{equation*}%
Now, (\ref{espt}) and Lemma \ref{l:pvan} below yield
\begin{equation}  \label{rem}
\mathbf{E}\{S^t_{-}\} \leq C_{\alpha,s}/ (t-E)^{\alpha s},
\end{equation}
where $C_{\alpha, s}$ does not depend on $t$. This implies (\ref{leg}) $%
\blacksquare$.

\section{Auxiliary results}

\begin{lemma}
\label{l:cr} Let $\xi $ be a non-negative random variable, $\varphi :\mathbb{%
R}_{+}\rightarrow \mathbb{R}$ be a function and $\mathbf{E}\{|\varphi (\xi
)|\}<\infty $. Assume that the probability law of $\xi $ has a bounded
density $f$ such that

(i) $\mathrm{supp}\,f=[0,\infty );$

(ii) the quantity $F(t)$ in (\ref{F}) 
is well defined for some $t>0$. Then we have%
\begin{equation}
\mathbf{Var}\{\varphi (\xi )\}\geq \left. (\mathbf{E}\{\varphi (\xi
)-\varphi (\xi+ t )\})^2\right/ F(t).  \label{cr}
\end{equation}%
\end{lemma}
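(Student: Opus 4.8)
The plan is to prove the variance lower bound
\begin{equation*}
\mathbf{Var}\{\varphi(\xi)\}\geq (\mathbf{E}\{\varphi(\xi)-\varphi(\xi+t)\})^{2}/F(t)
\end{equation*}
by exhibiting a suitable auxiliary random variable whose covariance with $\varphi(\xi)$ equals the numerator $\mathbf{E}\{\varphi(\xi)-\varphi(\xi+t)\}$, and then applying the Cauchy--Schwarz inequality in the form $\mathbf{Cov}\{\varphi(\xi),\psi(\xi)\}^{2}\leq \mathbf{Var}\{\varphi(\xi)\}\,\mathbf{Var}\{\psi(\xi)\}$. The natural candidate for $\psi$ is a likelihood-ratio type weight built from the density $f$, since the combination $J(t)=\int_{0}^{\infty}f^{2}(v-t)/f(v)\,dv$ appearing in $F(t)=J(t)-1$ is exactly a second moment of such a weight.

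First I would define, for fixed $t>0$, the function $\psi(v)=f(v-t)/f(v)$ for $v>0$ (setting $f(v-t)=0$ when $v<t$, which is harmless since $\mathrm{supp}\,f=[0,\infty)$). The key computation is then to evaluate $\mathbf{E}\{\psi(\xi)\}$ and $\mathbf{E}\{\varphi(\xi)\psi(\xi)\}$ directly from the definition of expectation. For the first,
\begin{equation*}
\mathbf{E}\{\psi(\xi)\}=\int_{0}^{\infty}\frac{f(v-t)}{f(v)}f(v)\,dv=\int_{0}^{\infty}f(v-t)\,dv=\int_{-t}^{\infty}f(u)\,du=1,
\end{equation*}
using $\mathrm{supp}\,f=[0,\infty)$ so that the mass below $-t$ vanishes. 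Similarly, a change of variables $u=v-t$ gives
\begin{equation*}
\mathbf{E}\{\varphi(\xi)\psi(\xi)\}=\int_{0}^{\infty}\varphi(v)f(v-t)\,dv=\int_{0}^{\infty}\varphi(u+t)f(u)\,du=\mathbf{E}\{\varphi(\xi+t)\},
\end{equation*}
where again the shifted integration domain collapses onto $[0,\infty)$ because $f$ is supported there. These two identities together yield $\mathbf{Cov}\{\varphi(\xi),\psi(\xi)\}=\mathbf{E}\{\varphi(\xi)\psi(\xi)\}-\mathbf{E}\{\varphi(\xi)\}\mathbf{E}\{\psi(\xi)\}=\mathbf{E}\{\varphi(\xi+t)\}-\mathbf{E}\{\varphi(\xi)\}$, so the numerator of the claimed bound is precisely $-\mathbf{Cov}\{\varphi(\xi),\psi(\xi)\}$.

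Next I would compute the variance of the weight itself:
\begin{equation*}
\mathbf{Var}\{\psi(\xi)\}=\mathbf{E}\{\psi^{2}(\xi)\}-(\mathbf{E}\{\psi(\xi)\})^{2}=\int_{0}^{\infty}\frac{f^{2}(v-t)}{f(v)}\,dv-1=J(t)-1=F(t),
\end{equation*}
which identifies $F(t)$ as exactly $\mathbf{Var}\{\psi(\xi)\}$ and explains why condition (ii) (finiteness of $J(t)$, hence of this variance) is needed. Applying Cauchy--Schwarz for covariance, $(\mathbf{E}\{\varphi(\xi)-\varphi(\xi+t)\})^{2}=\mathbf{Cov}\{\varphi(\xi),\psi(\xi)\}^{2}\leq \mathbf{Var}\{\varphi(\xi)\}\,F(t)$, and dividing by $F(t)$ gives the claim. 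The main point to handle carefully is the support bookkeeping in the two integral identities: the equalities $\mathbf{E}\{\psi(\xi)\}=1$ and $\mathbf{E}\{\varphi(\xi)\psi(\xi)\}=\mathbf{E}\{\varphi(\xi+t)\}$ rely essentially on $\mathrm{supp}\,f=[0,\infty)$ so that shifting by $+t$ never introduces mass from negative arguments; one should also verify that $\mathbf{E}\{|\varphi(\xi+t)|\}<\infty$ so the covariance is well defined, which follows once $F(t)<\infty$ together with the standing hypothesis $\mathbf{E}\{|\varphi(\xi)|\}<\infty$ via Cauchy--Schwarz. Beyond that, the argument is essentially a one-line application of Cauchy--Schwarz once the weight $\psi$ is introduced.
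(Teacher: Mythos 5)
Your proof is correct and is essentially the paper's own argument (the Hammersley--Chapman--Robbins inequality): the paper uses the centered weight $\psi(\xi)=(f(\xi-t)-f(\xi))/f(\xi)$, which differs from your $f(\xi-t)/f(\xi)$ only by the constant $1$ and hence gives the same covariance and the same variance $F(t)$. The support bookkeeping and the Cauchy--Schwarz step match the paper's proof exactly.
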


\begin{proof}
Consider the random variables $\varphi (\xi )$ and $\psi (\xi )=(f(\xi
-t)-f(\xi ))/f(\xi )$. It follows from the normalization condition%
\begin{equation}  \label{nor}
\int_{0}^{\infty }f(x)dx=1
\end{equation}%
that
\begin{equation*}
\mathbf{E}\{\psi (\xi )\}=\int_{0}^{\infty }\left( \frac{f(x-t)-f(x)}{f(x)}%
\right) f(x)dx=0.
\end{equation*}%
Thus,%
\begin{eqnarray*}
\mathbf{Cov}\{\varphi (\xi )\psi (\xi )\} :=& \mathbf{E}\{(\varphi (\xi )-
\mathbf{E}\{\varphi (\xi )\})(\psi (\xi )-\mathbf{E}\{\varphi (\xi )\})\} \\
&\hspace{2cm}=\mathbf{E}\{\varphi (\xi )\psi (\xi )\}.
\end{eqnarray*}%
On the other hand, we have from the Schwarz inequality for the expectations:%
\begin{align*}
(\mathbf{Cov}\{\varphi (\xi )\psi (\xi )\})^{2}\leq \mathbf{Var}\{\varphi
(\xi )\}\mathbf{Var}\{\psi (\xi )\}.
\end{align*}%
Combining these two relations and using the definition of $\psi $, according
to which%
\begin{eqnarray}
\mathbf{Var}\{\psi (\xi )\} &=&\int_{0}^{\infty }\left( \frac{f(x-t)-f(x)}{%
f(v)}\right) ^{2}f(x)dx  \label{Fpos} \\
&=&\int_{0}^{\infty }\frac{f^{2}(x-t)}{f(x)}dx-1=F(t)\geq 0,  \notag
\end{eqnarray}%
we get (\ref{cr}).
\end{proof}

\begin{remark}
\label{r:cr} The inequality is, in fact, the Hammersley-Chapman-Robbins
inequality (see \cite{Le-Ca:98}, Section 2.5.1) and is a version of the
Cram\'er-Rao inequality of statistics.
\end{remark}

\begin{lemma}
\label{l:pvan} Let $H$ be the one-dimensional Schr\"{o}dinger operator (see (%
\ref{SD}) -- (\ref{pot}) for $d=1$) with an i.i.d. non-negative random
potential $V=\{V(x)\}_{x\in \mathbb{Z}}$ whose common probability law has  a
density $f$ satisfying (\ref{cl2}). Denote $P^{t}=\mathcal{E}%
_{H^{t}}((0,E))=\{P^{t}(x,y)\}_{x,y\in \mathbb{Z}}$ the Fermi projection of $%
H^{t}$ of (\ref{ht}) corresponding to the spectral interval $(0,E)$. Then
there exist $s\in (0,1),\;C<\infty $ and $\gamma >0$ that do not depend on $t
$ and are such that we have for $E<t$:

\medskip (i) \ $\mathbf{E}\{|P^{t}(0,0)|\}\leq C(t-E)^{-s}$;

\medskip (ii) $\mathbf{E}\{|P^{t}(x,y)|\}\leq C(t-E)^{-s}e^{-\gamma (x-y)},$
for all $x\geq 1$ and $y\leq -1$.
\end{lemma}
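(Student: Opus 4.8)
The plan is to transfer the problem from the Fermi projection to the Green's function of $H^{t}$ (see (\ref{ht})) and then to run the fractional-moment method, i.e.\ the same machinery that already produced (\ref{espt}) via the proof of Lemma~4.5 of \cite{El-Co:17}. Writing $G^{t}(x,y;z)=\langle\delta_{x},(H^{t}-z)^{-1}\delta_{y}\rangle$, the first step is the standard eigenfunction-correlator estimate: for any $s\in(0,1)$ there is $C=C_{s}<\infty$, independent of $t$, with
\[
\mathbf{E}\{|P^{t}(x,y)|\}\le C\int_{0}^{E}\mathbf{E}\{|G^{t}(x,y;\lambda+i0)|^{s}\}\,d\lambda .
\]
Everything then reduces to an $s$-moment bound on $G^{t}(x,y;\lambda+i0)$ that is uniform in $\lambda\in(0,E)$ and carries a factor $(t-E)^{-s}$ and, for (ii), a factor $e^{-\gamma(x-y)}$.

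The second, geometric, step decouples the origin. Deleting the two bonds $(-1,0),(0,1)$ splits $\mathbb{Z}$ into the left half-line $(-\infty,-1]$, the site $\{0\}$ and the right half-line $[1,\infty)$; if $\tilde H^{t}$ denotes the corresponding direct sum and $W$ the deleted hopping, then $G^{t}=\tilde G^{t}+\tilde G^{t}WG^{t}$ together with $\tilde G^{t}(x,y)=0$ for $x\ge1,\ y\le-1$ gives
\[
G^{t}(x,y;z)=G_{+}(x,1;z)\,G^{t}(0,0;z)\,G_{-}(-1,y;z),\qquad x\ge1,\ y\le-1,
\]
where $G_{\pm}$ are the half-line Green's functions. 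The point is that $G_{+}$ uses only $\{V(n)\}_{n\ge1}$ and $G_{-}$ only $\{V(n)\}_{n\le-1}$, so the two are independent of one another and of $V(0)$ and $t$. A Schur-complement computation moreover writes $G^{t}(0,0;z)=(w(z)-V(0)-t)^{-1}$ with $w(z)=z-c-G_{+}(1,1;z)-G_{-}(-1,-1;z)$ again free of $V(0)$. The localization input is that, because (\ref{b_P}) holds on the whole spectrum for i.i.d.\ potentials with bounded density (see \cite{El-Co:17,LGP}), one has $\mathbf{E}\{|G_{\pm}(a,b;\lambda+i0)|^{s}\}\le Ce^{-\gamma|a-b|}$ uniformly in $\lambda$, and in particular $\mathbf{E}\{|G_{\pm}(1,1;\lambda+i0)|^{s}\}\le C$.

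Assertion (i) needs no probability. Since the a.s.\ spectra of $H$ and $H^{t}$ equal $[0,\infty)$ (by \cite{Pa-Fi:92} under (\ref{cl2})), $H^{t}\ge0$, and for $\phi=P^{t}\delta_{0}$ one has $\|\phi\|^{2}=\phi(0)=P^{t}(0,0)$ while $t|\phi(0)|^{2}\le\langle\phi,H^{t}\phi\rangle\le E\|\phi\|^{2}$ because $\phi$ lies in the spectral subspace of $(0,E)$ and $\langle\phi,H\phi\rangle\ge0$. Hence $P^{t}(0,0)\le E/t$ pointwise, which yields (i) for every $s\in(0,1)$ and large $t$. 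For (ii) I condition on $\{V(n)\}_{n\ne0}$ and integrate over $V(0)$, turning the middle factor into
\[
\mathbf{E}\{|G^{t}(0,0;\lambda+i0)|^{s}\mid\{V(n)\}_{n\ne0}\}=\int_{0}^{\infty}\frac{f(v)\,dv}{|v+t-w|^{s}}=:I_{t}(w),
\]
and I use the two elementary facts $I_{t}(w)\le C_{s}$ for all $w$ (integrable singularity as $s<1$) and $I_{t}(w)\le C_{s}(t-E)^{-s}$ whenever $\mathrm{Re}\,w\le t/2$. On the bulk event $\{\mathrm{Re}\,w\le t/2\}$ this second bound, combined with the independence of the two sides and the exponential bounds above, already produces $C(t-E)^{-s}e^{-\gamma(x-y)}$.

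The only genuinely delicate point, and the step I expect to be the main obstacle, is the resonant event $\{\mathrm{Re}\,w>t/2\}$, i.e.\ the rare configurations in which the random self-energy $\mathrm{Re}(G_{+}(1,1)+G_{-}(-1,-1))$ of one half-line is itself of order $t$; there the integration over $V(0)$ gives no gain, only $I_{t}(w)\le C_{s}$. I would dominate the indicator of this event by $\mathbf{1}_{\{|G_{+}(1,1)|>t/4\}}+\mathbf{1}_{\{|G_{-}(-1,-1)|>t/4\}}$, each of probability at most $Ct^{-s}$ by Chebyshev together with the bounded diagonal $s$-moment (this is where a moment condition such as (\ref{cl2}) and the choice of a small $s<1/2$ enter). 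Cauchy--Schwarz then bounds, e.g., $\mathbf{E}\{|G_{+}(x,1)|^{s}\mathbf{1}_{\{|G_{+}(1,1)|>t/4\}}\}$ by $\mathbf{E}\{|G_{+}(x,1)|^{2s}\}^{1/2}\mathbf{P}\{|G_{+}(1,1)|>t/4\}^{1/2}\le Ce^{-\gamma(x-1)/2}(t-E)^{-s/2}$, while the independent left factor contributes $\mathbf{E}\{|G_{-}(-1,y)|^{s}\}\le Ce^{-\gamma|y|}$, giving a term of the same shape with $s,\gamma$ halved. Taking the smaller exponents and renaming them yields (ii) uniformly in $\lambda\in(0,E)$; the concluding $\lambda$-integration over $(0,E)$ is then harmless since all the estimates above are uniform in $\lambda$.
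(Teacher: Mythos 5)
Your argument is correct in outline and each step you flag can be carried out, but it departs from the paper's proof at the central step, so it is worth comparing the two. Both proofs begin identically: reduce $\mathbf{E}\{|P^{t}(x,y)|\}$ to fractional moments of $G^{t}$ and decouple the origin by deleting the bonds $(0,\pm 1)$. You then invert the full Schur complement, writing $G^{t}(x,y)=G_{+}(x,1)\,G^{t}(0,0)\,G_{-}(-1,y)$ with $G^{t}(0,0)=(V(0)+2+t-z-G_{+}(1,1)-G_{-}(-1,-1))^{-1}$, which forces you to treat the resonant event where a half-line self-energy is itself of order $t$; you handle it by Chebyshev plus the a priori diagonal fractional-moment bound, at the cost of halving $s$ and $\gamma$. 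The paper instead applies the resolvent identity only once in each direction, obtaining (\ref{gt0y}): $G^{t}(0,y)=(V(0)+2+t-z)^{-1}(\delta_{0,y}+G^{t}(-1,y)+G^{t}(1,y))$. Here the denominator is the \emph{decoupled} central block, deterministically of modulus at least $t-E$ because $V(0)\geq 0$ and $\lambda\leq E<t$, so no resonance analysis is needed at all; the price is that the companion factors are full-line Green functions, controlled by the uniform a priori bound (Theorem 8.7 of \cite{Ai-Wa:15}), while the exponential decay is extracted from the outer half-line factors via H\"older and Minami's bounds (\ref{min1})--(\ref{min2}). Your deterministic variational proof of assertion (i) (for $\phi=P^{t}\delta_{0}$ one has $t|\phi(0)|^{2}\leq\langle\phi,H^{t}\phi\rangle\leq E\|\phi\|^{2}$ and $\|\phi\|^{2}=\phi(0)=P^{t}(0,0)$, hence $P^{t}(0,0)\leq E/t$) is simpler and sharper than the paper's, which deduces (i) from the Green-function estimate. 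Two small repairs: the decay $\mathbf{E}\{|G_{\pm}(a,b;\lambda+i0)|^{s}\}\leq Ce^{-\gamma|a-b|}$ should be quoted from \cite{Mi:96} (as the paper does in (\ref{min1})--(\ref{min2})) rather than inferred from the projection bound (\ref{b_P}), which is a consequence of, not a source for, Green-function decay; and on the resonant event you must keep $2s$ below Minami's exponent so that $\mathbf{E}\{|G_{+}(x,1)|^{2s}\}$ still decays exponentially, which is why the final exponent ends up as a minimum of several $s$'s, exactly as in the paper's $s=\min\{s_{1},s_{2}\}$.
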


\begin{proof}
Let%
\begin{equation*}
G^{t}(z)=(H^{t}-z)^{-1}=\{G^{t}(x,y;z)\}_{x,y\in \mathbb{Z}},\;z\in \mathbb{C%
}\setminus \mathbb{R}
\end{equation*}%
be the resolvent of $H^{t}$. It is shown below that the bounds%
\begin{equation}
\mathbf{E}\{|G^{t}(0,0;\lambda +i\eta )|^{s}\}\leq C/(t-E)^{s}  \label{eg00}
\end{equation}%
and%
\begin{equation}
\mathbf{E}\{|G^{t}(x,y;\lambda +i\eta )|^{s}\}\leq C/(t-E)^{s}e^{-\gamma
(x-y)}  \label{egxy}
\end{equation}%
are valid for some $s\in (0,1)$, all $\lambda \in (0,E),\;E<t,\;\eta \neq 0$
and $x\geq 1$, $y\leq -1$ with $C<\infty $ and $\gamma >0$ which are independent of $%
t,\;\lambda  $ and $\eta \neq 0$.

It follows from a slightly modified version of proof of Theorem 13.6 of \cite%
{Ai-Wa:15}, based on the contour integral representation of $P^t$ via $G^t$
and the Combes-Thomas theorem, that the assertion of the lemma can be
deduced from (\ref{eg00}) -- (\ref{egxy}).

Hence, it suffices to prove (\ref{eg00}) and (\ref{egxy}). To this end we
introduce the restrictions $H_{-}$ and $H_{+}$ of $H^{t}$ (or $H$) to the
integer-valued intervals $(-\infty ,-1]$ and $[1,\infty )$ and the rank one
operator $\widehat{V}_{0}^{t}$ of multiplication by $V(0)+2+t$. Let%
\begin{equation}
\widehat{H}=H_{-}\oplus \widehat{V}_{0}^{t}\oplus H_{+}  \label{hh}
\end{equation}%
be the double infinite block matrix consisting of the $(-\infty ,-1]\times
(-\infty ,-1]$ semi-infinite block $H_{-}$, $1\times 1$ "central" block $%
\widehat{V}_{0}^{t}$ and the $[1,\infty )\times \lbrack 1,\infty )$
semi-infinite block $H_{+}$. In other words, $\widehat{H}^{t}$ is obtained
from $H^{t}$ by replacing the four entries (equal $-1)$ with indices $%
(0,\pm 1)$ and $(\pm 1,0)$ by zero. Denote
\begin{align}
\widehat{G}^{t}(z)& =(\widehat{H}^{t}-z)^{-1}=\{\widehat{G}%
^{t}(x,y)\}_{x,y\in \mathbb{Z}},  \label{gpm} \\
G_{\pm }(z)& =(H_{\pm }-z)^{-1}=\{G_{\pm }(x,y)\}_{x,y\in \lbrack \pm 1,\pm
\infty )}  \notag
\end{align}%
the corresponding resolvents, where we omit the complex spectral parameters $%
z,\;\Im z\neq 0$ in the r.h.s. We have in view of (\ref{hh}):%
\begin{equation}
\widehat{G}^{t}=G_{-}\oplus (\widehat{V}_{0}^{t}-z)^{-1}\oplus G_{+}.
\label{ght1}
\end{equation}%
By using the resolvent identity $G^{t}=\widehat{G}^{t}-\widehat{G}^{t}(H^{t}-%
\widehat{H}^{t})G^{t}$, we obtain for all $x,y\in \mathbb{Z}$%
\begin{eqnarray*}
G^{t}(x,y) &=&\widehat{G}^{t}(x,y)+\widehat{G}%
^{t}(x,0)(G^{t}(-1,y)+G^{t}(1,y)) \\
&&+(\widehat{G}^{t}(x,1)+\widehat{G}^{t}(x,1))G^{t}(0,y).
\end{eqnarray*}%
This and (\ref{ght1}) imply%
\begin{equation}
G^{t}(0,y)=(V(0)+2+t-z)^{-1}(\delta _{0,y}+G^{t}(-1,y)+G^{t}(1,y)),\;y\leq 0
\label{gt0y}
\end{equation}%
and
\begin{equation}
G^{t}(x,y)=G^{t}(0,y)G_{+}(x,1),\;x\geq 1,\;y\leq -1.  \label{gt1}
\end{equation}%
Likewise, we have from the resolvent identity $G^{t}=\widehat{G}%
^{t}-G^{t}(H^{t}-\widehat{H}^{t})\widehat{G}^{t}$:%
\begin{equation}
G^{t}(x,0)=(V(0)+2+t-z)^{-1}(\delta _{x,0}+G^{t}(x,-1)+G^{t}(x,1)),\;x\geq 0
\label{gtx0}
\end{equation}%
and%
\begin{equation}
G^{t}(x,y)=G^{t}(x,0)G_{-}(-1,y),\;x\geq 1,y\leq -1.  \label{gt2}
\end{equation}%
We have then from (\ref{gt0y}) and the Schwarz inequality for any $s>0$ and
all $y\leq 0$
\begin{eqnarray}
\mathbf{E}\{|G^{t}(0,y)|^{s}\} &\leq &C_{s}\,(g(z-2-t))^{1/2}  \label{egt0y} \\
&&(\delta _{0,y}+\mathbf{E}^{1/2}\{|G^{t}(-1,y)|^{2s}\}+\mathbf{E}%
^{1/2}\{|G^{t}(1,y)|^{2s}\}),  \notag
\end{eqnarray}%
where $C_{s}$ depends only on $s>0$ and
\begin{equation*}
g(\zeta )=\mathbf{E}\{|V(0)-\zeta |^{-2s}\}`=\int_0^\infty \frac{f(v)dv}{%
|v-\zeta |^{2s}}.
\end{equation*}%
Choosing here $\zeta =z-2-t,\;z=\lambda +i\eta $ and using (\ref{nor}), we
have for $\lambda \leq E<t$
\begin{equation}
g(z-2-t)\leq (t-E)^{-2s}  \label{gb}
\end{equation}%
and then (\ref{egt0y}) implies for any $s>0$ and all $y\leq 0$
\begin{align}
& \mathbf{E}\{|G^{t}(0,y;z)|^{s}\}\leq C_{s}(t-E)^{-s}  \label{gt0ys} \\
& \hspace{0.5cm}\times (\delta _{0,y}+\mathbf{E}^{1/2}\{|G^{t}(-1,y;z)|^{2s}%
\}+\mathbf{E}^{1/2}\{|G^{t}(1,y;z)|^{2s}\}).  \notag
\end{align}%
We will use now Theorem 8.7 of \cite{Ai-Wa:15}, according to which if $A_{0}$
is a selfadjointe operator in $l^{2}(\mathbb{Z}^{d})$, $U=\{U(x)\}_{x\in
\mathbb{Z}^{d}}$ is a collection of independent random variables whose probability
densities $f_x,\, x \in \mathbb{Z}$ are bounded uniformly in $x$, i.e.,
$\sup_{x \in \mathbb{Z}} \sup_{v \in \mathbb{R}} < \infty$ and if $\mathcal{G}(z)=(A-z)^{-1}%
=\{\mathcal{G}
(x,y;z)\}_{x,y\in \mathbb{Z}^{d}}$ is the resolvent of $A=A_{0}+U$, then for
any $s\in (0,1)$ there exists $C_{s}^{\prime }<\infty $ such that the bound%
\begin{equation*}
\mathbf{E}\{|\mathcal{G}(x,y;z)|^{s}\}\leq C_{s}^{\prime }
\end{equation*}%
holds uniformly in $z\in \mathbb{C\setminus R}$\ for all $x,y\in \mathbb{Z}%
^{d}$.

Choosing here $A=H^{t}$ with   $H^{t}$ of (\pageref{ht}) and noting that
for the potential $V^t$ the conditions of the theorem are satisfied (all
$V^t(x)=V(x), \, x\neq 0$ are i.i.d random variables with a bounded common probability
density and $V^t(0)=V(0)+t$ has the density $f(v-t)$ also bounded), we obtain for any $s\in (0,1),\;\lambda <E, \; \eta
\neq 0$ and all $x,y\in \mathbb{Z}^{d}$
\begin{equation}
\mathbf{E}\{|G^{t}(x,y;\lambda +i\eta )|^{s}\}\leq C_{s}^{\prime \prime },
\label{gtxyb}
\end{equation}%
where $C_{s}^{\prime \prime }$ does not depend on $t,E$ and $\eta $.

Plugging this bound with $x=\pm 1$ into (\ref{gt0ys}), we get for any $%
s_{1}\in (0,1/2),\;\lambda <E<t$ and all $y\leq 0$
\begin{equation}
\mathbf{E}\{|G^{t}(0,y;\lambda +i\eta )|^{s_{1}}\}\leq
B_{s_{1}}/(t-E)^{s_{1}},  \label{eg0yf}
\end{equation}%
where $B_{s_{1}}$ does not depend on $t,E$ and $\eta $.

Analogous argument yields for $s_{1}\in (0,1/2),\;\lambda <E<t$ and all $%
x\geq 0$
\begin{equation}
\mathbf{E}\{|G^{t}(x,0;\lambda +i\eta )|^{s_{1}}\}\leq
B_{s_{1}}/(t-E)^{s_{1}}.  \label{egx0f}
\end{equation}%
We obtain (\ref{eg00}), hence assertion (i) of the lemma, from (\ref{eg0yf})
with $y=0$ (or from (\ref{egx0f}) with $x=0$).

To prove (\ref{egxy}), hence assertion (ii) of the lemma, we combine (\ref%
{gt1}) and (\ref{gt2}) to write for any $s>0$ and all $x\geq 1$ and $y\leq 1$%
:
\begin{align*}
|G^{t}(x,y;z)|^{s}& =|G^{t}(x,0;z)|^{s/2}|G^{t}(0,y;z)|^{s/2} \\
\times| & G_{+}(x,1;z)|^{s/2}|G_{-}(-1,y;z)|^{s/2}
\end{align*}%
and then, by H\"{o}lder inequality for expectations,%
\begin{eqnarray}
&&\mathbf{E}\{|G^{t}(x,y;z)|^{s}\}\leq \mathbf{E}^{1/4}\{|G^{t}(x,0;z)|^{2s}%
\}\mathbf{E}^{1/4}\{|G^{t}(0,y)|^{3s}\}  \label{g4} \\
&&\hspace{5cm}\times \mathbf{E}^{1/4}\{|G_{+}(x,1)|^{2s}\}\mathbf{E}%
^{1/4}\{|G_{-}(-1,y)|^{2s}\}.  \notag
\end{eqnarray}%
Using here (\ref{eg0yf}) and (\ref{egx0f}), we get for $s_{1}\in (0,1/2)$,
and all $x\geq 1$ and$\;y\leq -1$

\begin{equation}
\mathbf{E}\{|G^{t}(x,y;z)|^{s_{1}}\}\leq \frac{B_{s_{1}}}{(t-E)^{s_{1}}}%
\mathbf{E}^{1/4}\{|G_{+}(x,1)|^{2s_{1}}\}\mathbf{E}^{1/4}%
\{|G_{-}(-1,y)|^{2s_{1}}\}.  \label{gsi}
\end{equation}%
To bound the two last factors on the right, we will use a result from \cite%
{Mi:96} according to which if $H_{+}$ is the discrete one dimensional Schr%
\"{o}dinger operator in $l^{2}([1,\infty ))$ with i.i.d. potential whose
common probability law is such that $\mathbf{E}\{|V(0)|^{\kappa }\}<\infty $
for some $\kappa >0$, then 
 for any spectral interval $I$ there exist $C(I)<\infty ,\,\ \gamma (I)>0
$ and $s_{2}\leq \kappa /2$ such that%
\begin{equation}
\mathbf{E}\{|G_{+}(x,1)|^{s_{_{2}}}\}\leq C(I)e^{-\gamma (I)x},\;x\geq 1.
\label{min1}
\end{equation}%
The same is valid for the Hamiltonian $H_{-}$ acting in $l^{2}((-\infty ,-1])
$, see (\ref{hh}):
\begin{equation}
\mathbf{E}\{|G_{+}(-1,y)|^{s_{_{2}}}\}\leq C(I)e^{\gamma (I)y},\;y\leq -1.
\label{min2}
\end{equation}%
The bounds are the basic ingredient of the proof of (\ref{b_P}) for the one
dimensional case \cite{Mi:96}.

Using these bounds in the r.h.s. of (\ref{gsi}), we obtain assertion (ii) of
the lemma with $s=\min \{s_{1},s_{2}\}$, where $s_{1}$ and $s_{2}$ are
defined in (\ref{eg0yf}) -- (\ref{egx0f}) and (\ref{min1}) -- (\ref{min2}).
\end{proof}

\begin{remark}\label{r:weyl}
By using the standard facts of spectral theory, it is easy to prove the
weaker version of Lemma \ref{l:pvan}%
\begin{equation}
\lim_{t\rightarrow \infty }P^{t}(x,y)=0, \; x\geq 0,\,y\leq 0  \label{lp}
\end{equation}%
which, however, does not allow us to justifies the limiting transition $t\rightarrow
\infty $ in the second term in the r.h.s. of (\ref{espt}).

Indeed, according to the spectral theorem%
\begin{equation*}
G^{t}(x,y;z)=\int_{-\infty }^{\infty }\frac{\mathcal{E}_{H^{t}}(x,y;d%
\lambda )}{\lambda -z},\,\, \Im z\neq 0.
\end{equation*}%
The formula, the continuity properties of the Stieltjes transform of a
bounded signed measure and (\ref{fp}) imply that (\ref{lp}) follows from the
analogous limiting relation for the resolvent with $\Im z\neq 0$:%
\begin{equation}
\lim_{t\rightarrow \infty }G^{t}(x,y;z)=0,\,\, x\geq 0,\,y\leq 0.  \label{lg}
\end{equation}%
Viewing the part $(t-1)V(0)$ of the $(0,0)$th  entry of $H^{t}$ as a rank
one perturbation of $H=H^{t}|_{t=1}$, we obtain%
\begin{equation*}
G^{t}(x,y;z)=G(x,y;z)-\frac{(t-1)V(0)G(x,0;z)G(0,y;z)}{1+(t-1)V(0)G(0,0;z)},
\end{equation*}%
hence%
\begin{equation}
\lim_{t\rightarrow \infty }G^{t}(x,y;z)=G(x,y;z)-\frac{G(x,0;z)G(0,y;z)}{%
G(0,0;z)}.  \label{lgt}
\end{equation}%
Recall now the Weyl formula for the resolvent of the discrete
Schr\"odinger operator (see, e.g. \cite{Te:99}, Section 1.2):%
\begin{equation*}
G(x,y;z)=G(0,0;z)\left\{
\begin{array}{cc}
\psi _{+}(x;z)\psi _{-}(y;z), & x\geq y, \\
\psi _{-}(x;z)\psi _{+}(y;z), & x\leq y,%
\end{array}%
\right.
\end{equation*}%
where $\psi ^{\pm}$ are the  solutions of the corresponding discrete
Schr\"odinger equation which belong to $l^{2}(\mathbb{Z}_{\pm })$ for
$\Im z \neq 0 $ and satisfy
the condition $\psi ^{\pm }(0)=1$. Combining the formula with (\ref{lgt}),
we obtain (\ref{lg}), hence (\ref{lp}).
\end{remark}

\textbf{Acknowledgment} The work is supported in part by the grant 4/16-M of
the National Academy of Sciences of Ukraine.

\end{document}